\documentclass[oneside]{article}

\usepackage[utf8]{inputenc}
\usepackage[a4paper, margin=4cm]{geometry}
\usepackage{parskip}
\usepackage{framed}
\usepackage{comment}
\usepackage{amsmath} 
\usepackage{dsfont}
\usepackage{amssymb}
\usepackage{enumerate}
\usepackage[english]{babel}
\usepackage{tipa}
\usepackage{amsmath}
\usepackage{wasysym}
\usepackage{makecell}
\usepackage{listings}
\usepackage{url}
\usepackage{mathtools}
\usepackage{amsthm}
\usepackage[colorlinks=true,
            linkcolor=blue,
            citecolor=blue,
            urlcolor=blue]{hyperref}
\usepackage{cleveref}
\usepackage{multirow}
\usepackage{subfig,graphicx}
\usepackage[toc,page,title]{appendix}
\PassOptionsToPackage{table}{xcolor}
\usepackage{colortbl}
\usepackage{booktabs}
\usepackage[table]{xcolor}
\usepackage{array}
\usepackage{caption}
\usepackage[export]{adjustbox}
\usepackage[numbers,sort&compress]{natbib}
\usepackage{float}
\floatstyle{plaintop}
\restylefloat{table}

\numberwithin{equation}{section}
\newtheorem{theorem}{Theorem}[section] 
\newtheorem{lemma}[theorem]{Lemma} 
\newtheorem{example}[theorem]{Example}

\begin{document}

\title{The role of root allometry and root overlap in vegetation pattern formation
\author{
Jelle van der Voort\thanks{Mathematical Institute, Leiden University, 2300 RA, Leiden, Netherlands} 
\quad 
Ricardo Martinez-Garcia\thanks{\parbox[t]{0.85\textwidth}{Center for Advanced Systems Understanding,(CASUS) – Helmholtz-Zentrum Dresden-Rossendorf (HZDR), Untermarkt 20, Görlitz, 02826, Germany}}\space\thanks{\parbox[t]{0.85\textwidth}{ICTP South American Institute for Fundamental Research \& Instituto de Física Teórica, Universidade Estadual Paulista- UNESP, R. Dr. Bento Teobaldo Ferraz, 271- 2- Várzea da Barra Funda, São Paulo- SP, 01140-070, Brazil}} \space\thanks{\parbox[t]{0.85\textwidth}{Department of Ecology, Institute of Biosciences, University of São Paulo, São Paulo, Brazil}} 
\quad 
Arjen Doelman$^*$
}
}

\maketitle

\begin{abstract}

\noindent We investigate how root allometry and below-ground root competition due to root overlap affect the conditions for pattern formation in non-local scalar vegetation models. We start from a spatially explicit two-component vegetation-water system in which plants acquire soil water through their laterally extended root systems, with root allometry represented by an interaction kernel whose spatial extent increases with the above-ground biomass density. In the limit of fast water dynamics, we eliminate the water variable, resulting in a non-local scalar model for the vegetation dynamics. This reduction reveals that accounting for root overlap naturally leads to a double integral structure in the scalar model, in contrast to the single convolution term commonly used to model non-local competition for resources. For the resulting scalar model, we identify two distinct mechanisms that can drive Turing instabilities. The first instability mechanism requires root growth to outpace mortality, while the second mechanism relies on the development of competitive pressure zones, in which vegetation growth is impeded by the cumulative competitive pressure exerted by neighbouring areas with higher biomass density. Root allometry promotes the occurrence of both instability mechanisms, while root overlap inhibits the emergence of competitive pressure zones. Hence, our results show that both root allometry and root overlap are important features to consider when studying vegetation pattern formation.

\end{abstract}

\vspace*{\fill}

Author for correspondence: Jelle van der Voort\\
Email: \textit{j.van.der.voort@math.leidenuniv.nl} 

\newpage

\section{Introduction}

Over the last decades, considerable research effort has been devoted to establishing the conditions under which vegetation pattern formation can occur \citep{Rietkerk2008,Borgogno2009,Meron2019,Gandhi2019}. Analyses of scalar PDE models have consistently shown that non-local competitive interactions can induce Turing instabilities that drive patterning \citep{Britton1989,Lefever1997,Fuentes2003,Clerc2005,Leimar2008,Martinez2013a,Martinez2013b,Martinez2014,Tega2022,vanderVoort2026}. In such models, non-local competition is typically represented by a single convolution term where the interaction kernel determines how the dynamics at a given location is influenced by its surroundings. However, in the context of vegetation dynamics, where non-local competition typically stems from root-mediated competition for water, this approach has two important limitations.

The first drawback is the simplifying assumption that the interaction range of non-local competition is fixed. This assumption neglects life-stage-dependent effects, such as the much smaller root extents of seedlings and saplings as compared to mature plants. A more suitable modelling framework should account for plants extending their root systems as they grow. This way, plants gain access to previously inaccessible water sources as they mature, thereby increasing their total water uptake and stimulating further growth. Several modelling studies have included this positive feedback mechanism and shown that it promotes the formation of spatial patterns in vegetation-water models \citep{Gilad2007,Bennett2023} and scalar vegetation models \citep{Lefever2009,Cisternas2020,Tlidi2024}. Nevertheless, a systematic derivation of how root allometry affects the onset of patterning through a Turing bifurcation has not yet been carried out.

The second shortcoming concerns below-ground root competition when the roots of different plants overlap, which we refer to as root-to-root competition. This form of competition is not captured by scalar models in which non-local competition is represented through a single convolution term. Such an integral accounts for the non-local water uptake through a plant's own root system and the resulting competitive pressure it exerts on above-ground biomass in its surroundings. However, it neglects that neighbouring plants may also exploit the same water sources when their root systems partly overlap. Consequently, when two plants extract water from the same source, yet neither root system extends beyond the above-ground biomass location of the other plant, their competition for that shared source is not taken into account \citep{Maciel2021}. The cumulative water uptake is therefore overestimated, resulting in an underestimation of the competitive pressure between neighbouring plants. Overlap structures of this kind are well established in niche theory \citep{Macarthur1967,Schoener1974,HernandezGarcia2009}, but have not been considered in models of vegetation pattern formation.

The goal of this study is to resolve these two limitations and investigate how the inclusion of root allometry and proper modelling of root-to-root competition affect the conditions for pattern formation in non-local scalar vegetation models. 

This paper is structured as follows. In Section \ref{section: derivation of a scalar equation}, we introduce a two-component vegetation-water system that explicitly accounts for root-mediated water uptake, including root allometry. We then eliminate the explicit water variable by considering the limit of fast water dynamics, thereby obtaining a reduced scalar vegetation model of the full two-component system. In Section \ref{section: derivation of dispersion relation}, we derive the dispersion relation corresponding to the positive homogeneous equilibria of this scalar model, which can be used to determine whether Turing instabilities can occur\footnote{Readers who are primarily interested in the ecological implications rather than the technical derivation of the dispersion relation can skip Section \ref{section: derivation of dispersion relation} and proceed to Section \ref{section: pattern formation conditions}.}. In Section \ref{section: pattern formation conditions}, we derive and interpret the conditions for pattern formation, and analyze how root allometry and root overlap affect the resulting pattern-generating mechanisms. 

\section{Derivation of a scalar equation}
\label{section: derivation of a scalar equation}

In this section, we derive a scalar vegetation model that accounts for both root allometry and root-to-root competition. We first introduce a two-component vegetation-water model that includes non-local water acquisition by vegetation through their laterally extended root systems. By considering the limit of fast water dynamics, we obtain a reduced scalar model. Finally, we interpret the novel double integral structure that arises in this model and discuss the representation of both root allometry and root overlap.

\subsection{A vegetation-water system with non-local water uptake}
\label{section: model equations}

We consider the following two-component framework,
\begin{align}
\label{model: two component model with non local water uptake}
\begin{split}
    \dot{u} &= \overbrace{p_1s(u)\int \phi\left(\frac{x'-x}{r(u(x,t))}\right)w(x',t)\,dx'}^{\text{plant growth through water uptake}}\quad-\quad \overbrace{m(u)}^{\text{mortality}}\quad+\quad\overbrace{D_u\Delta  u}^{\text{biomass spread}},\\
    \dot{w} &= \underbrace{p_2}_\text{precipitation}\quad-\quad \underbrace{p_3w}_{\text{evaporation}}\quad-\quad \underbrace{p_4w\int \phi\left(\frac{x'-x}{r(u(x',t))}\right)s(u(x',t))\,dx'}_{\text{water uptake by plants}}.
    \end{split}
\end{align}
Here, $u(x,t)$ is the above-ground biomass density and $w(x,t)$ the soil water concentration, both at location $x$ and time $t$. The parameters $p_i$ represent the rates of biomass production through water uptake ($p_1$), precipitation ($p_2$), evaporation ($p_3$) and water uptake by vegetation ($p_4$). The continuous function $m(u) \geq 0$ (with $m(u) = 0$ only if $u=0$) models the mortality of vegetation. We retain generality by leaving it unspecified instead of, e.g., assuming a linear form for simplicity, as in \citep{Klausmeier1999,HilleRisLambers2001,Gilad2004}. We do assume $\lim_{u \rightarrow \infty} m(u)= \infty$, such that mortality increasingly limits plant growth at high biomass densities. Biomass spread is modelled as diffusion with diffusion coefficient $D_u$. In vegetation-water systems, slow biomass spread combined with fast soil water diffusion can typically generate diffusion-driven instabilities that lead to pattern formation \citep{Klausmeier1999,HilleRisLambers2001,Gilad2004}. Since our focus is on vegetation pattern formation generated by lateral root systems alone, we omit soil water diffusion. Consequently, our model is most directly applicable to arid regions in which lateral soil water transport is weak, for instance, in soils with low hydraulic conductivity \citep{Getzin2016}.

The kernel function $\phi(x)\geq 0$ describes the spatial distribution of roots around a plant. Assuming a flat terrain and isotropic root growth, the kernel should be symmetric, and thus $\phi(x) = \phi(-x)$. We further assume that the root density decreases monotonically with the distance from the focal plant, implying $\phi'(x)\leq 0$ for $x \geq 0$. Lastly, we require the kernel function to be normalized, i.e., $\int\phi(x)\,dx=1$, such that the model retains the same parameter values for spatially homogeneous solutions as in the spatially explicit setting. This property can be imposed on any ecologically reasonable competition kernel by scaling it with a multiplicative constant.

The functions $s(u)$ and $r(u)$ influence the spatial root distribution differently. As a measure of the total root density, $s(u)$ uniformly scales the magnitude of the root distribution while leaving its spatial extent unchanged (left panel of Figure \ref{figure: influence of r and s}). In contrast, $r(u)$ stretches or contracts the root distribution over space, thereby modifying the root extent (middle panel of Figure \ref{figure: influence of r and s}). Therefore, $s(u)$ and $r(u)$ jointly determine both the intensity and the spatial extent of non-local water uptake (right panel of Figure \ref{figure: influence of r and s}).

In the absence of vegetation, both effects should vanish, so $s(0)=r(0)=0$. Since roots supply plants with resources that are essential for survival, the total root density and root extent should be positive whenever vegetation is present. Therefore, we require $s(u),r(u)>0$ whenever $u>0$. Furthermore, we associate increases in biomass density with the maturation of vegetation, which includes the development of a denser and more extensive root system. Therefore, both functions should increase with $u$, leading to the conditions $s'(u)>0$ and $r'(u)>0$. Particularly, this last condition ensures that root allometry is included in the model: the root extent of vegetation increases with the above-ground biomass.

\begin{figure}[t]
    \centering
    \begin{minipage}[b]{0.32\textwidth}
        \centering
        \includegraphics[width=\textwidth]{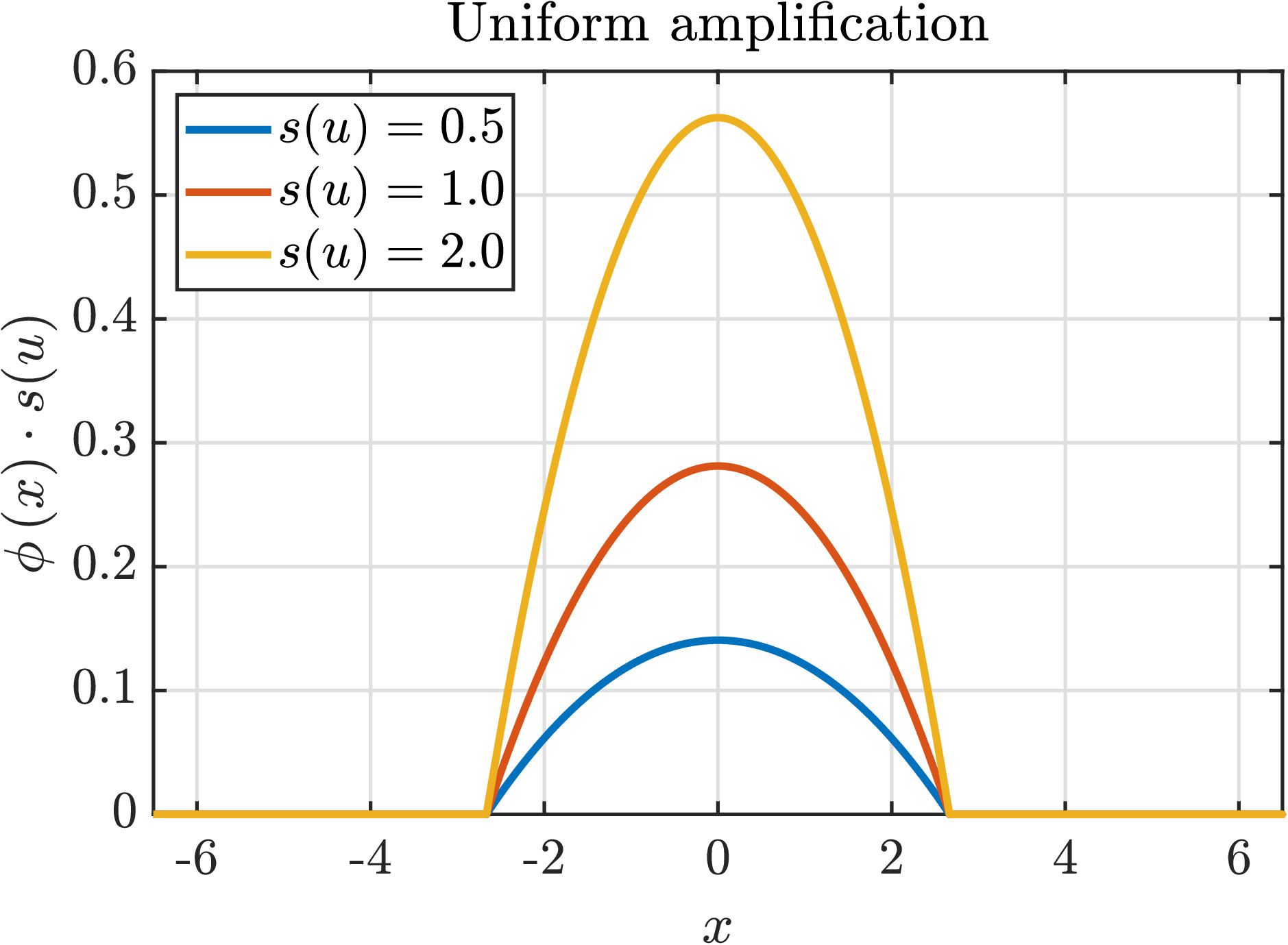}
    \end{minipage}%
        \hspace{0.1cm}
    \begin{minipage}[b]{0.32\textwidth}
        \centering
        \includegraphics[width=\textwidth]{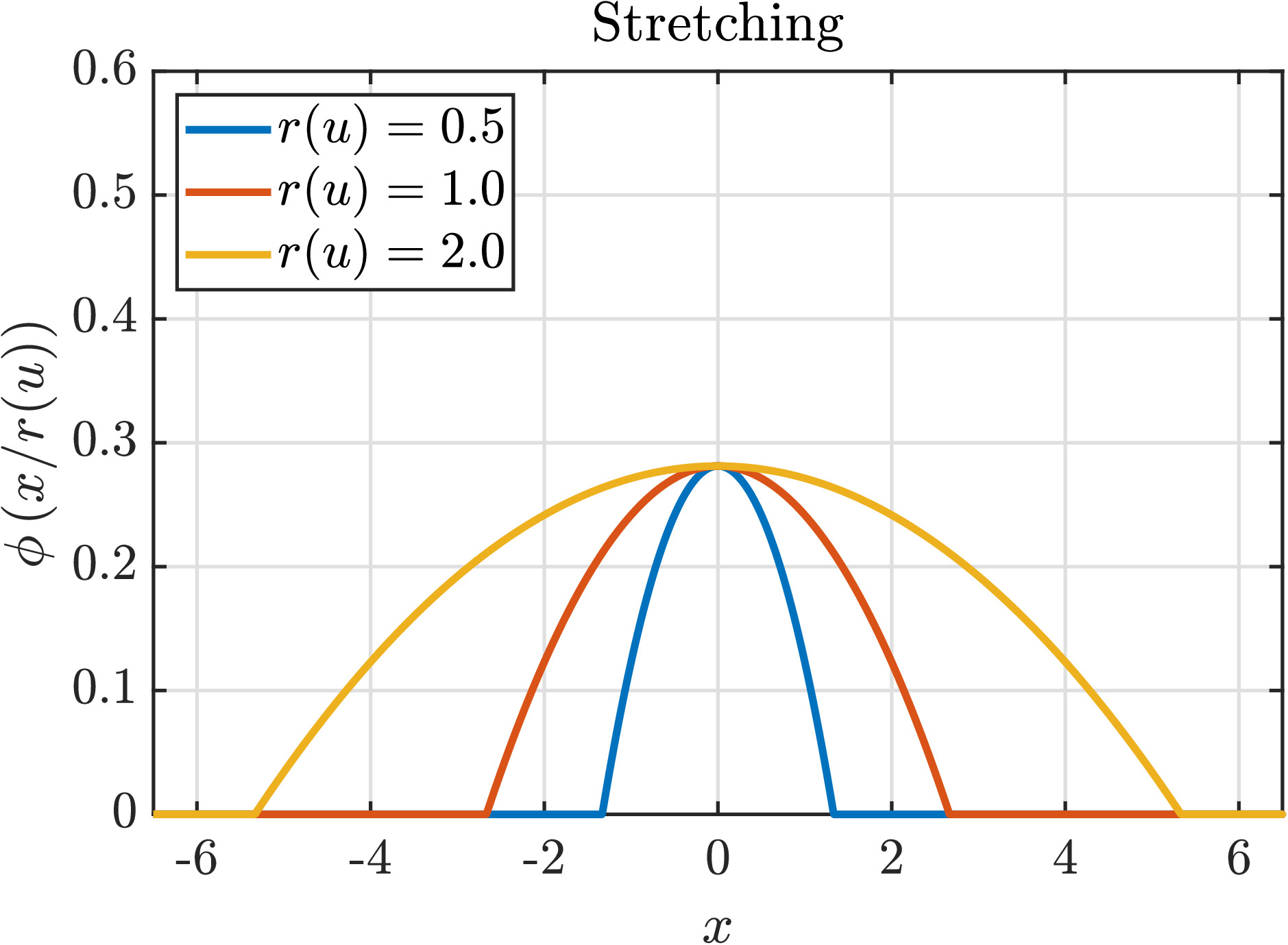}
    \end{minipage}%
    \hspace{0.1cm}
    \begin{minipage}[b]{0.32\textwidth}
        \centering
        \includegraphics[width=\textwidth]{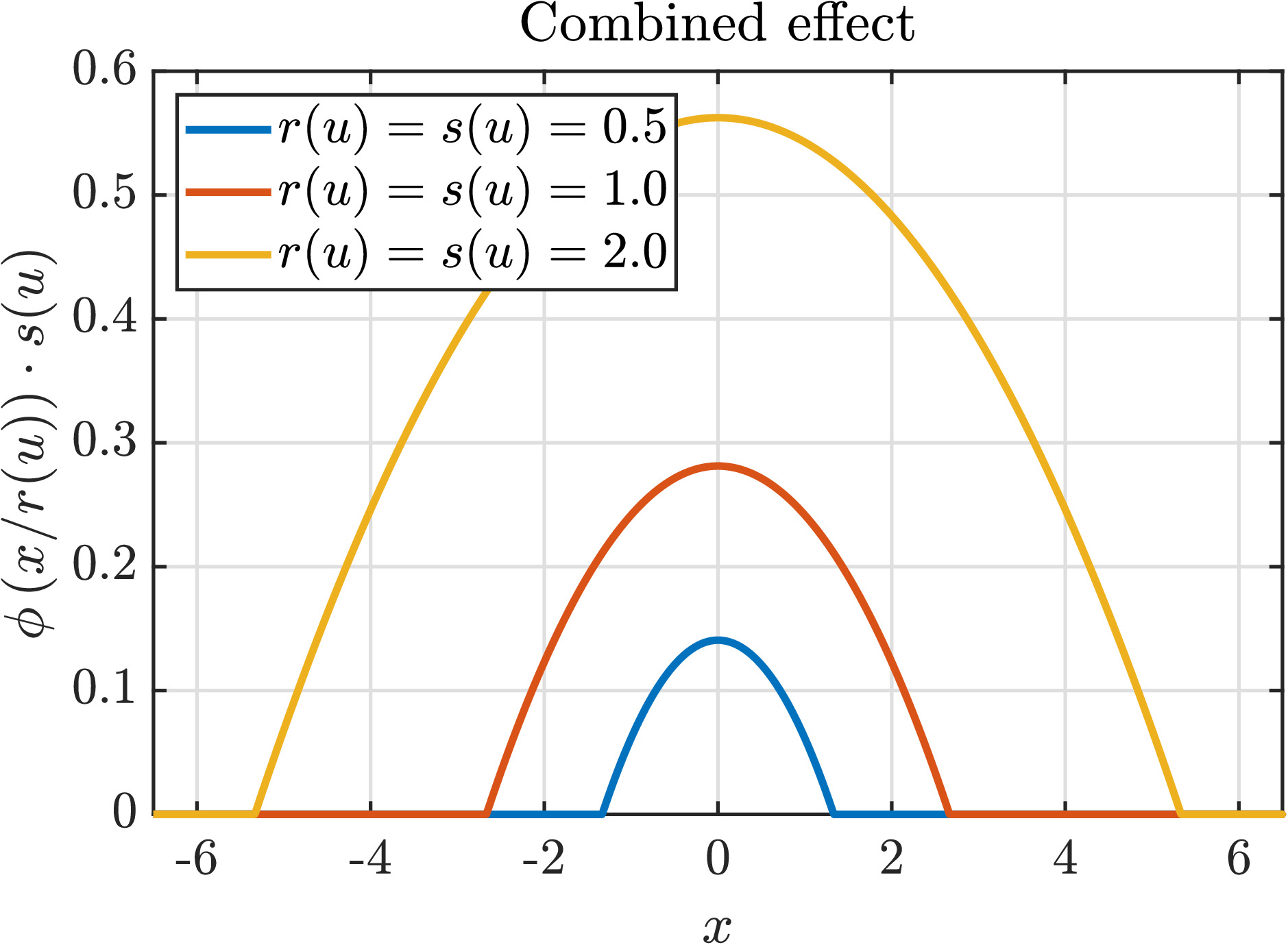}
    \end{minipage}%
\caption{The distinct roles of $s(u)$ and $r(u)$ in shaping the spatial root distribution $\phi\left(\frac{x}{r(u)}\right)s(u)$. Increasing $s(u)$ enhances the magnitude of the distribution uniformly without changing its spatial extent (left panel). In contrast, increasing $r(u)$ stretches the distribution over space, thereby increasing the root extent (middle panel). The combined effect of both mechanisms is shown in the right panel. We used a parabolic kernel function $\phi(x)$ to make the plots.}
\label{figure: influence of r and s}
\end{figure}

Finally, we note that $r$ depends on $u(x,t)$ in the vegetation equation, but on $u(x',t)$ in the water equation. This important distinction reflects the different reference points of the two processes. The vegetation equation describes the growth of a plant located at $x$ through the uptake of soil water in its neighbourhood, weighted by its own root system \citep{Meron2019}. Hence, the relevant root characteristics are those associated with $u(x,t)$. In contrast, the water equation describes the depletion of soil water at the location $x$ by all roots present at that location. These roots originate from vegetation at different locations $x'$ in the neighbourhood of $x$, so the integral over $x'$ accounts for the root characteristics associated with $u(x',t)$. This formulation preserves the overall balance between water uptake and depletion (see Appendix \ref{appendix: balance}). In particular, the water equation accounts for the depletion of a shared water source by roots originating from different plants, which in turn affects their above-ground growth through the vegetation equation. Hence, below-ground competition when root systems overlap is accurately represented.

\subsection{Fast water dynamics}

We can eliminate the water variable from (\ref{model: two component model with non local water uptake}) by assuming that all processes governing the soil water concentration (i.e., precipitation, evaporation and water uptake by plants) occur on a much faster time scale than that of the vegetation dynamics (i.e., biomass growth and mortality) \citep{HernandezGarcia2009,Martinez2014}. This separation of time scales can be incorporated by scaling the parameters associated with the water dynamics. Specifically, we introduce the small parameter $0< \delta \ll 1$ and set $p_2 = \frac{\tilde{p_2}}{\delta}$, $p_3 = \frac{\tilde{p_3}}{\delta}$ and $p_4 = \frac{\tilde{p_4}}{\delta}$, where $\tilde{p}_2$, $\tilde{p}_3$ and $\tilde{p}_4$ are of $\mathcal{O}(1)$. The remaining parameters and variables are assumed to be of $\mathcal{O}(1)$ as well. These scalings motivate an asymptotic expansion for the soil water variable of the form $w(x,t) = w_0(x,t) + \mathcal{O}(\delta)$. Substituting the scalings and this expansion into the $\dot{w}$-equation of (\ref{model: two component model with non local water uptake}) and collecting the leading-order terms gives,
\begin{align}
    0 = \tilde{p}_2-\tilde{p}_3w_0(x,t)-\tilde{p}_4w_0(x,t)\int \phi\left(\frac{x'-x}{r(u(x',t))}\right)s(u(x',t))\,dx'.
\end{align}
This equation yields a closed-form expression for the leading-order soil water concentration:
\begin{align}
\label{expression: approximation}
    w_0(x,t) = \frac{\tilde{p}_2}{\tilde{p}_3+\tilde{p}_4\int \phi\left(\frac{x'-x}{r(u(x',t))}\right)s(u(x',t))\,dx'}.
\end{align}
Since $\delta$ is small by assumption, the soil water concentration quickly approaches the quasi-steady state $w_0(x,t)$, as determined by the spatial distribution of vegetation. Substituting the approximation (\ref{expression: approximation}) for $w(x,t)$ directly into the $\dot{u}$-equation of (\ref{model: two component model with non local water uptake}) yields the scalar vegetation model that is the main focus of this paper:
\begin{align}
\begin{split}
\label{model: reduced nonlocal scalar equation}
    \dot{u} = \,&p_1s(u)\int \phi\left(\frac{x'-x}{r(u(x,t))}\right)\theta \left(\int \phi\left(\frac{x''-x'}{r(u(x'',t))}\right)s(u(x'',t))\,dx''\right)dx'\\&\quad\quad\quad\quad\quad\quad\quad\quad\quad\quad\quad\quad\quad\quad\quad\quad\quad\quad\quad\quad\quad\quad\quad\quad-m(u)+D_u \Delta u,
\end{split}
\end{align}
where, for notational convenience, we introduced,
\begin{align}
\label{definition: theta}
    \theta\left(y\right) \coloneqq \frac{\tilde{p}_2}{\tilde{p}_3+\tilde{p}_4 y} = \frac{p_2}{p_3+p_4 y}.
\end{align}

\subsection{Interpretation of the scalar model}
\label{section: double integral}

The reduced scalar model (\ref{model: reduced nonlocal scalar equation}) consists of three distinct contributions: root-mediated vegetation growth, vegetation mortality and biomass spread. Both vegetation mortality and biomass spread retain the same interpretation as in the two-component vegetation-water model (\ref{model: two component model with non local water uptake}), since neither is affected by the reduction procedure. The distinctive feature of (\ref{model: reduced nonlocal scalar equation}) is its double integral term, which represents vegetation growth through soil water acquisition via laterally extended roots. We explain the meaning of this term working from the inside outward.

We start with the argument of $\theta$, i.e., the integral that also appears in the $\dot{w}$-equation of (\ref{model: two component model with non local water uptake}). This integral measures the cumulative root density at the location $x'$ originating from plants in its surrounding area. The contribution to this quantity originating from above-ground biomass at $x''$ is weighted both by its total root density and its root extent through the functions $s(u(x'',t))$ and $r(u(x'',t))$, respectively. The higher the cumulative root density, the greater the competitive pressure for soil water. Therefore, this integral effectively measures the competitive pressure at the location $x'$.

Next, the function $\theta$ maps this competitive pressure at $x'$ to the available soil water at $x'$. Since higher competitive pressure corresponds to a lower soil water availability, $\theta$ should be decreasing as a function of the competitive pressure. This is indeed the case, as can be deduced from (\ref{definition: theta}).

Finally, the outer integral plays the same role as the integral in the original $\dot{u}$-equation of (\ref{model: two component model with non local water uptake}): it measures the non-local soil water uptake of the vegetation located at $x$ through its lateral root system, which is characterized by the root functions $s(u(x,t))$ and $r(u(x,t))$. The parameter $p_1$ then converts this total water uptake into biomass accumulation.

Root allometry is present in the scalar model (\ref{model: reduced nonlocal scalar equation}) through the function $r(u)$. It appears in the argument of the kernel functions and quantifies the spatial root extent as a function of the above-ground biomass density, analogously to the two-component setting (\ref{model: two component model with non local water uptake}) and other scalar vegetation models in the literature \citep{Lefever2009,Cisternas2020,Tlidi2024}.

Root-to-root competition, on the other hand, is included through the double integral structure. This distinguishes our model from scalar models in the literature that describe non-local resource competition using only a single convolution \citep{Britton1989,Lefever1997,Fuentes2003,Clerc2005,Leimar2008,Martinez2013a,Martinez2013b,Martinez2014,Tega2022,vanderVoort2026}. These models include the competitive pressure exerted by the root systems of vegetation, which directly affects above-ground biomass growth. In our model, however, we also account for the fact that this growth depends on an associated root system as well, which supplies the soil water necessary for biomass accumulation. The reduction procedure shows that properly representing this dynamic (i.e., soil water uptake when root systems overlap) requires two interacting integrals rather than a single one.

\section{Derivation of the dispersion relation}
\label{section: derivation of dispersion relation}

In this section, we derive the dispersion relation associated with the positive homogeneous equilibria of the scalar vegetation model (\ref{model: reduced nonlocal scalar equation}). First, we discuss the existence of homogeneous equilibria. Then, we perform linear stability analysis around an arbitrary positive homogeneous equilibrium, yielding a first form of the dispersion relation. Lastly, we simplify its terms to obtain a more tractable expression.

\subsection{Homogeneous equilibria}
\label{subsection: homogeneous equilibria}

A homogeneous equilibrium is a steady-state solution that is also constant in space, i.e., $\bar{u}(x,t) = \bar{u}$. Substituting this expression into (\ref{model: reduced nonlocal scalar equation}) yields,
\begin{align}
    m(\bar{u}) = p_1s(\bar{u}) \int \phi\left(\frac{x'-x}{r(\bar{u})}\right)\theta\left(\int \phi\left(\frac{x''-x'}{r(\bar{u})}\right)s(\bar{u})dx''\right) dx'.
\end{align}
By introducing the change of integration variable $y = \frac{x''-x'}{r(\bar{u})}$, the inner integral can be calculated as,
\begin{align}
    \int \phi\left(\frac{x''-x'}{r(\bar{u})}\right)s(\bar{u})dx'' = s(\bar{u})r(\bar{u}) \int \phi\left(y\right)dy = s(\bar{u})r(\bar{u}),
\end{align}
where the second equality uses the normalization of the kernel $\phi(x)$. The outer integral can be evaluated in a similar manner, resulting in the condition,
\begin{align}
\label{equation: homogeneous steady state equation}
m(\bar{u}) = \frac{p_1p_2s(\bar{u})r(\bar{u})}{p_3+p_4s(\bar{u})r(\bar{u})},
\end{align}
which must be satisfied by every homogeneous equilibrium $\bar{u}$. By assumption, $s(0)=0$ and $r(0)=0$, implying that $u=0$ is a homogeneous equilibrium for any parameter setting. Ecologically, this equilibrium corresponds to a bare soil state with no vegetation. Whether (\ref{equation: homogeneous steady state equation}) also admits one or more positive solutions, representing vegetated states, depends on the particular choices of the functions $s(u)$ and $r(u)$. For ecologically realistic choices of these functions, we expect at least one positive homogeneous equilibrium to exist over part of the parameter space. Thus, from this point forward, we let $\bar{u}>0$ denote an arbitrary positive homogeneous equilibrium that satisfies the condition (\ref{equation: homogeneous steady state equation}).

\subsection{Linear stability analysis}
\label{section: linear stability analysis}
We perturb the reduced scalar model (\ref{model: reduced nonlocal scalar equation}) around the positive homogeneous equilibrium $\bar{u}$ by employing the Fourier ansatz,
\begin{align}
\label{equation: Fourier ansatz}
  u(x,t) = \bar{u} + \varepsilon e^{\omega t+i k x} +\rm{c.c.}.
\end{align}
Here, $\varepsilon>0$ is assumed to be small and $\rm{c.c.}$ denotes the complex conjugate. The dispersion relation $\omega=\omega(k;\bar{u})$ represents the growth rate of the perturbation associated with the wavenumber $k$. We substitute this ansatz into (\ref{model: reduced nonlocal scalar equation}) and expand the functions $m(u)$, $s(u)$ and $r(u)$ around $\bar{u}$, which yields,
\begin{align}
\label{equation: first form of the full linearization}
&\omega \cdot \varepsilon e^{\omega t + i k x} = p_1\left(s(\bar{u}) + s'(\bar{u}) \cdot \varepsilon e^{\omega t + i k x}\right)\cdot     \int \phi\left(\frac{x'-x}{r(\bar{u})+r'(\bar{u})\cdot\varepsilon e^{\omega t + i kx}+\mathcal{O}(\varepsilon^2)}\right)\nonumber\\
&\cdot     \,\theta \left(\int \phi\left(\frac{x''-x'}{r(\bar{u})+r'(\bar{u})\cdot\varepsilon e^{\omega t + i kx''}+\mathcal{O}(\varepsilon^2)}\right)\cdot \left(s(\bar{u})+s'(\bar{u})\cdot\varepsilon e^{\omega t + ikx''}\right)dx''\right)dx'\nonumber\\ & -\left(m(\bar{u}) + m'(\bar{u})\cdot\varepsilon e^{\omega t + i k x}\right) - D k^2 \cdot \varepsilon e^{\omega t + i k x} + \mathcal{O}(\varepsilon^2).
\end{align}
To identify all leading-order terms, we first expand $\phi$ and its argument, and then evaluate the inner integral. Next, we expand the function $\theta$. Finally, we evaluate the outer integral.

\subsubsection{The inner integral}

We start by expanding the argument of $\phi$ within the inner integral around $\varepsilon = 0$, and subsequently expand $\phi$ itself around $\frac{x''-x'}{r(\bar{u})}$:
\begin{align}
\label{equation: phi expansion}
&\phi\left(\frac{x''-x'}{r(\bar{u})+\varepsilon e^{\omega t + i kx''}r'(\bar{u})+\mathcal{O}(\varepsilon^2)}\right)\nonumber\\ &\quad =
     \phi\left(\left(\frac{1}{r(\bar{u})}-\varepsilon e^{\omega t + i kx''}\frac{r'(\bar{u})}{r^2(\bar{u})}\right)\left(x''-x'\right)+\mathcal{O}(\varepsilon^2)\right)\nonumber\\
     &\quad = \phi\left(\frac{x''-x'}{r(\bar{u})}\right)-\varepsilon e^{\omega t + i kx''}\frac{r'(\bar{u})}{r^2(\bar{u})}\left(x''-x'\right)\phi'\left(\frac{x''-x'}{r(\bar{u})}\right) +\mathcal{O}(\varepsilon^2).
\end{align}
Using this expansion, the inner integral can be written as the sum
\begin{align}
    \mathcal{I}_1+\varepsilon e^{\omega t + i k x'}\left(\mathcal{I}_2(k)+\mathcal{I}_3(k)\right)+\mathcal{O}(\varepsilon^2),
\end{align}
where the integrals $\mathcal{I}_1$, $\mathcal{I}_2(k)$ and $\mathcal{I}_3(k)$ are expressed and evaluated below.
\begin{enumerate}
\item
The first integral $\mathcal{I}_1$ evaluates to,
\begin{align}
\label{evaluation: I_1}
    \mathcal{I}_1 \coloneqq s(\bar{u})\int \phi\left(\frac{x''-x'}{r(\bar{u})}\right)dx'' = s(\bar{u})r(\bar{u}),
\end{align}
following the calculations in Section \ref{subsection: homogeneous equilibria}.
\item 
For the second integral $\mathcal{I}_2(k)$, we (again) introduce the change of integration variable $y = \frac{x''-x'}{r(\bar{u})}$, and obtain,
\begin{align}
\label{evaluation: I_2}
\begin{split}
    \mathcal{I}_2(k) &\coloneqq e^{-ikx'} s'(\bar{u}) \int \phi\left(\frac{x''-x'}{r(\bar{u})}\right) e^{i k x''} dx''\\ &=  s'(\bar{u})r(\bar{u}) \int \phi(y)e^{i \left(r(\bar{u})k\right) y}\,dy\\
    &= s'(\bar{u})r(\bar{u})\hat{\phi}\left(r(\bar{u}) k\right),
\end{split}
\end{align}
where $\hat{\phi}(k)$ denotes the Fourier transform of $\phi(x)$.
\item
The third integral $\mathcal{I}_3(k)$ can be evaluated with a similar approach:
\begin{align}
\begin{split}
    \mathcal{I}_3(k) &\coloneqq - e^{-ikx'} \frac{s(\bar{u}) r'(\bar{u})}{r^2(\bar{u})} \int (x''-x') \phi'\left(\frac{x''-x'}{r(\bar{u})}\right) e^{i k x''} dx''\\
    &= - s(\bar{u}) r'(\bar{u})\int y \,\phi'(y)e^{i \left(r(\bar{u}) k\right) y}\,dy\\ &= -  s(\bar{u})r'(\bar{u})\hat{\psi}\left(r(\bar{u}) k\right),
\end{split}
\end{align}
where we introduced the function $\psi(x) \coloneqq \phi'(x)x$ through its Fourier transform $\hat{\psi}(k)$.
\end{enumerate}

\subsubsection{\texorpdfstring{The function $\theta$}{The function theta}}

As a next step, we expand $\theta$ around $s(\bar{u})r(\bar{u})$:
\begin{align}
\begin{split}
    &\theta\left(s(\bar{u})r(\bar{u})+\varepsilon e^{\omega t + i k x'}\left(\mathcal{I}_2(k)+\mathcal{I}_3(k)\right)+\mathcal{O}(\varepsilon^2)\right) = \\&\quad\quad\quad\quad\quad\quad\theta\left(s(\bar{u})r(\bar{u})\right)+\varepsilon e^{\omega t + i k x'}\theta'\left(s(\bar{u})r(\bar{u})\right)\left(\mathcal{I}_2(k)+\mathcal{I}_3(k)\right)+\mathcal{O}(\varepsilon^2),
\end{split}
\end{align}
where the derivative of $\theta$ is given by,
\begin{align}
\label{derivative of theta}
    \theta'(y) = -\frac{p_2p_4}{\left(p_3+p_4y\right)^2}.
\end{align}

\subsubsection{The outer integral}

By expanding the kernel function analogously to (\ref{equation: phi expansion}), the outer integral can be written as,
\begin{align}
\begin{split}
    &\int \left(\phi\left(\frac{x'-x}{r(\bar{u})}\right)-\varepsilon e^{\omega t + i kx}\frac{r'(\bar{u})}{r^2(\bar{u})}\left(x'-x\right)\phi'\left(\frac{x'-x}{r(\bar{u})}\right)\right)\\&\quad\quad\cdot\left(\theta\left(s(\bar{u})r(\bar{u})\right)+\varepsilon e^{\omega t + ikx'}\theta'\left(s(\bar{u})r(\bar{u})\right)\left(\mathcal{I}_2(k)+\mathcal{I}_3(k)\right)\right)dx'+ \mathcal{O}\left(\varepsilon^2\right).
\end{split}
\end{align}
As for the inner integral, we decompose the outer integral into a sum of integrals:
\begin{align}
\label{equation: expression outer integral}
    \mathcal{J}_1+\varepsilon e^{\omega t + ikx}\left(\mathcal{J}_2(k)+\mathcal{J}_3(k)\right)+\mathcal{O}\left(\varepsilon^2\right),
\end{align}
and express and evaluate the integrals $\mathcal{J}_1$, $\mathcal{J}_2(k)$ and $\mathcal{J}_3(k)$ below.

\begin{enumerate}
\item
We evaluate $\mathcal{J}_1$ following the calculations in Section \ref{subsection: homogeneous equilibria}:
\begin{align}
    \mathcal{J}_1 \coloneqq \theta\left(s(\bar{u})r(\bar{u})\right) \int \phi\left(\frac{x'-x}{r(\bar{u})}\right)dx' = r(\bar{u}) \, \theta\left(s(\bar{u})r(\bar{u})\right).
\end{align}
\item 
The second integral $\mathcal{J}_2(k)$ can be evaluated similarly to (\ref{evaluation: I_2}),
\begin{align}
\begin{split}
    \mathcal{J}_2(k) &\coloneqq e^{-ikx} \,\theta'\left(s(\bar{u})r(\bar{u})\right)\left(\mathcal{I}_2(k)+\mathcal{I}_3(k)\right) \int \phi\left(\frac{x'-x}{r(\bar{u})}\right) e^{i k x'} dx'\\ &= r(\bar{u})\,\theta'\left(s(\bar{u})r(\bar{u})\right)\left(\mathcal{I}_2(k)+\mathcal{I}_3(k)\right)\int \phi(y)e^{i \left(r(\bar{u})k\right) y}\,dy\\
    &= r(\bar{u})\,\theta'\left(s(\bar{u})r(\bar{u})\right)\left(\mathcal{I}_2(k)+\mathcal{I}_3(k)\right)\hat{\phi}(r(\bar{u})k).
\end{split}
\end{align}
\item
Finally, $\mathcal{J}_3(k)$ evaluates to,
\begin{align}
\begin{split}
    \mathcal{J}_3(k) &\coloneqq - \frac{r'(\bar{u})\,\theta\left(s(\bar{u})r(\bar{u})\right)}{r^2(\bar{u})} \int (x'-x) \phi'\left(\frac{x'-x}{r(\bar{u})}\right) dx'\\
    &= - r'(\bar{u})\, \theta\left(s(\bar{u})r(\bar{u})\right)\int y \,\phi'(y)\,dy\\ &= r'(\bar{u})\, \theta\left(s(\bar{u})r(\bar{u})\right).
\end{split}
\end{align}
The last step uses the following identity,
\begin{align}
\label{identity: useful}
    \int y \,\phi'(y)\,dy = \left[y\phi(y)\right]^\infty_{-\infty}-\int\phi(y)\,dy = - 1,
\end{align}
where we used integration-by-parts together with Lemma \ref{lemma: x phi(x) to zero} in Appendix \ref{appendix: proofs}.
\end{enumerate}

\subsubsection{The leading-order equation}

We substitute the expansion of the outer integral (\ref{equation: expression outer integral}) into the perturbed system (\ref{equation: first form of the full linearization}) and collect all lower order terms:
\begin{align}
\begin{split}
    &\omega\cdot\varepsilon e^{\omega t+ikx} = p_1\Bigg(s(\bar{u})r(\bar{u})\,\theta\left(s(\bar{u})r(\bar{u})\right)+\varepsilon e^{\omega t + ikx}\bigg(\left.\frac{d}{du}\Big(s(u)r(u)\Big)\right|_{u=\bar{u}}\theta\left(s(\bar{u})r(\bar{u})\right)\\&+s(\bar{u})r(\bar{u})\,\theta'(s(\bar{u})r(\bar{u}))\left(s'(\bar{u})r(\bar{u})\hat{\phi}(r(\bar{u})k)-s(\bar{u})r'(\bar{u})\hat{\psi}(r(\bar{u})k)\right)\hat{\phi}(r(\bar{u})k)\bigg)\Bigg)\\&-\left(m(\bar{u})+m'(\bar{u})\cdot\varepsilon e^{\omega t+ikx}\right)-D_uk^2\cdot \varepsilon e^{\omega t+ikx}+\mathcal{O}(\varepsilon^2).
    \end{split}
\end{align}
The $\mathcal{O}(1)$-terms cancel since $\bar{u}$ is a homogeneous equilibrium that satisfies (\ref{equation: homogeneous steady state equation}). Next, we collect all $\mathcal{O}(\varepsilon)$-terms and divide by $\varepsilon e^{\omega t+ikx}$ to obtain a first expression for the dispersion relation $\omega(k;\bar{u})$,
\begin{align}
\begin{split}
\label{equation: dispersion relation first version}
    &\omega(k;\bar{u}) = p_1\left.\frac{d}{du}\Big(s(u)r(u)\Big)\right|_{u=\bar{u}}\theta\left(s(\bar{u})r(\bar{u})\right)-m'(\bar{u})-D_uk^2\\&\quad\quad\quad+p_1s(\bar{u})r(\bar{u})\,\theta'(s(\bar{u})r(\bar{u}))\left(s'(\bar{u})r(\bar{u})\hat{\phi}(r(\bar{u})k)-s(\bar{u})r'(\bar{u})\hat{\psi}(r(\bar{u})k)\right)\hat{\phi}(r(\bar{u})k).
    \end{split}
\end{align}

\subsection{Reformulating the dispersion relation}

As a first step towards simplifying the expression for the dispersion relation, we invoke the homogeneous steady-state equation (\ref{equation: homogeneous steady state equation}) to combine the first two terms of (\ref{equation: dispersion relation first version}) into
\begin{align}
    m(\bar{u})\left(\frac{\left.\frac{d}{du}\Big(s(u)r(u)\Big)\right|_{u=\bar{u}}}{s(\bar{u})r(\bar{u})}-\frac{m'(\bar{u})}{m(\bar{u})}\right) = \frac{m^2(\bar{u})}{s(\bar{u})r(\bar{u})}\left.\frac{d}{du}\left(\frac{s(u)r(u)}{m(u)}\right)\right|_{u=\bar{u}}.
\end{align}
Next, we note that the dispersion relation depends on the Fourier transform of the function $\psi(x) = \phi'(x)x$. To eliminate this dependence, we derive an identity relating $\hat{\psi}(k)$ to $\hat{\phi}(k)$ and its derivative. We begin by differentiating $\hat{\phi}(k)$ with respect to $k$:
\begin{align}
\begin{split}
\label{equation: relation between phi_prime, phi and psi}
    \hat{\phi}'(k) &= \frac{d}{dk}\int\phi(x)e^{ikx}dx = \int\phi(x)\frac{d}{dk}\left(e^{ikx}\right)dx = i \int x\phi(x)e^{ikx}dx  \\ &=  i\left(\left[x\phi(x)\frac{e^{ikx}}{ik}\right]^\infty_{-\infty}-\frac{1}{ik}\int\left(\phi(x)+x\phi'(x)\right)e^{ikx}dx\right)\\ &=-\frac{1}{k}\left(\hat{\phi}(k)+\hat{\psi}(k)\right).
\end{split}
\end{align}
In the last equality, we used the bound $|e^{ikx}| \leq 1$ together with $\lim_{x \rightarrow \pm \infty} x \phi(x)=0$, as established by Lemma \ref{lemma: x phi(x) to zero} in Appendix \ref{appendix: proofs}. Rearranging the terms yields the identity,
\begin{align}
    \hat{\psi}(k) = - \left(\hat{\phi}(k)+k\hat{\phi}'(k)\right),
\end{align}
which we can use to eliminate $\hat{\psi}\left(r(\bar{u}) k\right)$ from (\ref{equation: dispersion relation first version}). Finally, we introduce the function $\beta(k;a)$ as,
\begin{align}
\label{definition: function beta}
    \beta(k;a) \coloneqq \left(1+a\right)\hat{\phi}\left( k\right) + k \hat{\phi}'\left(k\right).
\end{align}
Combining the above simplifications, we can restate the dispersion relation (\ref{equation: dispersion relation first version}) as,
\begin{align}
\begin{split}
\label{equation: dispersion relation final version}
    \omega(k;\bar{u}) = &m(\bar{u})\bigg(\frac{m(\bar{u})}{s(\bar{u})r(\bar{u})}\left.\frac{d}{du}\left(\frac{s(u)r(u)}{m(u)}\right)\right|_{u=\bar{u}}\\&\quad\quad\quad\quad\quad\quad\quad\quad\quad\quad\quad-\frac{p_4s(\bar{u})r'(\bar{u})}{p_3+p_4s(\bar{u})r(\bar{u})}\hat{\phi}(r(\bar{u})k) \beta(r(\bar{u})k;a)\bigg)-D_uk^2,
    \end{split}
\end{align}
where,
\begin{align}
\label{definition: a}
    a = \frac{s'(\bar{u})r(\bar{u})}{s(\bar{u})r'(\bar{u})} = \frac{\frac{s'(\bar{u})}{s(\bar{u})}}{\frac{r'(\bar{u})}{r(\bar{u})}}>0.
\end{align}

\section{Conditions for vegetation pattern formation}
\label{section: pattern formation conditions}

In this section, we determine under which conditions the scalar model (\ref{model: reduced nonlocal scalar equation}) admits vegetation pattern formation through a Turing instability. For a homogeneous equilibrium $\bar{u}$ to be Turing unstable, it must be temporally stable (i.e., stable against homogeneous perturbations) and spatially unstable (i.e., unstable against heterogeneous perturbations). Furthermore, the system must suppress the appearance of unstable perturbations with arbitrarily large wavenumbers, indicating well-posedness. We verify these properties by analyzing the sign of the dispersion relation $\omega(k;\bar{u})$ as given in (\ref{equation: dispersion relation final version}). Specifically, the equilibrium $\bar{u}$ is temporally stable when $\omega(0;\bar{u})<0$, while it is spatially unstable when $\omega(k;\bar{u})>0$ over some range of non-zero wavenumbers $k$. The system is not ill-posed when $\lim_{k \rightarrow \pm\infty} \omega(k;u)<0$.

\subsection{Well-posedness and temporal stability}

\subsubsection{Well-posedness}
\label{section: well-posedness}

Because the non-negative kernel $\phi(x)$ is normalized, we have $\lim_{k \rightarrow \pm\infty} \hat{\phi}(k) = 0$ (by the Riemann-Lebesgue lemma). Moreover, $\lim_{k \rightarrow \pm\infty} \beta(k;a) = 0$ (see Lemma \ref{lemma: beta(k) to zero} in Appendix \ref{appendix: proofs}). Together, these limits imply
\begin{align}
    \lim_{k \rightarrow \pm \infty} \omega(k;u) = -\infty,
\end{align}
with the stabilizing effect of biomass spread modelled as diffusion dominating for large wavenumbers. Hence, exponentially growing perturbations with arbitrarily large wavenumbers are suppressed, making the model (\ref{model: reduced nonlocal scalar equation}) not ill-posed.

\subsubsection{Temporal stability}

The existence of a positive homogeneous equilibrium in the scalar model (\ref{model: reduced nonlocal scalar equation}) depends on the specific choices of the functions $s(u)$ and $r(u)$, as well as the parameter values (see Section \ref{subsection: homogeneous equilibria}). However, whenever such an equilibrium exists, and all homogeneous equilibria are non-degenerate (i.e., not located at a bifurcation point), it can be shown that there must exist a positive homogeneous equilibrium $\bar{u}$ that is temporally stable (see Appendix \ref{appendix: temporally stable}).

\subsection{Spatial instabilities}

The dispersion relation in (\ref{equation: dispersion relation final version}) consists of three contributions: a constant term, a term generated by non-local water uptake and a term that comes from biomass spread modelled by diffusion. The influence of the diffusive contribution, $-D_u k^2$, is the most straightforward: it decreases the value of the dispersion relation for every $k \neq 0$, and its effect becomes stronger when $|k|$ increases. Consequently, biomass spread alone cannot induce a spatial instability, since such an instability requires $\omega(k;\bar{u})$ to become positive over some range of non-zero wavenumbers $k$. On the contrary, biomass spread may even prevent a spatial instability by counteracting the (potential) destabilizing effects of the remaining two terms. This is in clear contrast with the role of biomass spread in classical vegetation-water models \citep{Klausmeier1999,HilleRisLambers2001,Gilad2004}, where the interplay between fast water diffusion and slow biomass spread drives pattern formation through a diffusion-driven Turing instability.

Positive values of the dispersion relation must therefore originate from one or both of the remaining terms. Thus, for $\bar{u}$ to be spatially unstable, either the constant term must be positive,
\begin{align}
\label{necessary condition patterning I}
\left.\frac{d}{du}\left(\frac{s(u)r(u)}{m(u)}\right)\right|_{u=\bar{u}}>0,
\end{align}
or the second term must take positive values for at least some wavenumbers $k$, which can be stated equivalently as
\begin{align}
\label{necessary condition patterning II}
\inf_{k} \left(\hat{\phi}(k) \beta (k;a)\right) < 0,
\end{align}
since $s(\bar{u}),r(\bar{u}),r'(\bar{u})>0$ (see Section \ref{section: model equations}).

Below, we analyze when (\ref{necessary condition patterning I}) and (\ref{necessary condition patterning II}) become sufficient to drive pattern formation through a Turing instability. Furthermore, we provide an ecological interpretation of the underlying mechanisms.

\subsubsection{Pattern formation driven by root growth outpacing mortality}

First, we examine the pattern-forming condition (\ref{necessary condition patterning I}), which requires the constant term of the dispersion relation to be positive. As shown in Section \ref{section: well-posedness}, the contribution to the dispersion relation that arises from non-local water uptake vanishes as $k$ diverges. Hence, for sufficiently large $k$, say $k>\tilde{k}$, the dispersion relation is mainly governed by the constant term and the diffusive contribution. When $D_u$ is small enough, diffusion is too weak to overcome the positive constant term for wavenumbers $k$ immediately beyond $\tilde{k}$. As a result, $\omega(k,\bar{u})>0$ over a non-empty interval of wavenumbers. Thus, the condition (\ref{necessary condition patterning I}) generates a Turing instability when biomass spread is sufficiently slow. This instability mechanism is based on the same underlying principle as the growth-outpacing-susceptibility mechanism introduced in \citep{vanderVoort2026}. 

To assess the ecological interpretation of the condition (\ref{necessary condition patterning I}) for pattern formation, we first consider the root distribution function
\begin{align}
    \chi(x,x',t) = \phi\left(\frac{x'-x}{r(u(x,t))}\right)s(u(x,t)),
\end{align}
which represents the root density at the location $x'$ originating from above-ground biomass at $x$. Integrating this distribution over space gives the total below-ground biomass $b(x,t)$ of the root system associated with the above-ground biomass at $x$:
\begin{align}
\begin{split}
    b(u(x,t)) &= \int \phi\left(\frac{x'-x}{r(u(x,t))}\right)s(u(x,t))dx'\\ &= s(u(x,t))r(u(x,t))\int \phi\left(y\right)dy \\&= s(u(x,t))r(u(x,t)).
\end{split}
\end{align}
Thus, the product $s(u)r(u)$ equals the below-ground (root) biomass $b(u)$ associated with an above-ground biomass density $u$. The condition (\ref{necessary condition patterning I}) therefore requires the ratio $b(u)/m(u)$ to be increasing at the homogeneous equilibrium $\bar{u}$. In other words, root growth must increase faster than mortality as the above-ground biomass increases.

The double integral arising from an explicit description of overlapping root systems does not affect the existence criteria of this instability mechanism. In contrast, root allometry promotes its occurrence. Specifically, strong root allometry corresponds to a larger value of $r'(u)$, causing the root biomass $b(u) = s(u)r(u)$ to increase more rapidly with the above-ground biomass density, potentially causing root growth to outpace mortality. Nonetheless, root allometry is not necessary for this instability mechanism: even when $r(u)$ is constant, pattern formation can still occur when the total root density $s(u)$ increases faster than the vegetation mortality $m(u)$ as the above-ground biomass density increases.

\begin{figure}[t!]
    \centering
    \begin{minipage}[b]{0.49\textwidth}
        \centering
        \includegraphics[width=\textwidth]{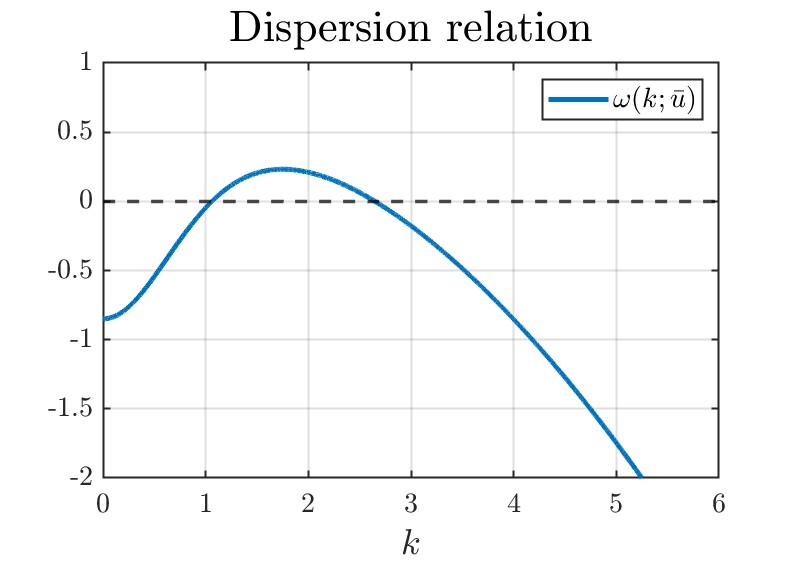}
    \end{minipage}%
        \hspace{0.1cm}
    \begin{minipage}[b]{0.49\textwidth}
        \centering
        \includegraphics[width=\textwidth]{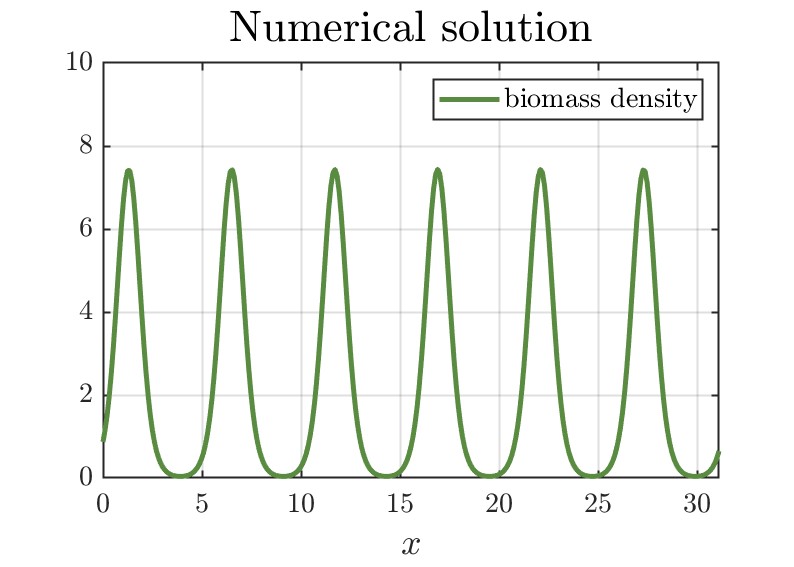}
    \end{minipage}%
\caption{Pattern formation induced by root growth outpacing mortality. The left panel shows that the dispersion relation takes an excursion above zero, indicating a Turing instability. The right panel shows the corresponding vegetation pattern obtained from a numerical simulation.}
\label{figure: spatial patterns root to shoot}
\end{figure}

\begin{example}
In the model (\ref{model: reduced nonlocal scalar equation}), we take $m(u)$, $s(u)$ and $r(u)$ linear, i.e., $m(u) = q_mu$, $s(u)=q_su$ and $r(u)=\frac{q_{r_1}u}{1+q_{r_2}u}$ for some constants $q_m,q_s,q_{r_1},q_{r_2}>0$. For this choice, a positive temporally stable homogeneous equilibrium $\bar{u}$ exists whenever
\begin{align}
    p_2>\frac{q_m}{p_1q_sq_{r_1}}\left(2\sqrt{p_3p_4q_sq_{r_1}}+p_3q_{r_2}\right).
\end{align}
Moreover, 
\begin{align}
    \frac{d}{du}\left(\frac{s(u)r(u)}{m(u)}\right)=\frac{q_sq_{r_1}}{q_m}\frac{1}{\left(1+q_{r_2}u\right)^2}>0,
\end{align}
so, root growth consistently outpaces mortality. The dispersion relation evaluated at $\bar{u}$ reads,
\begin{align}
    \omega(k;\bar{u}) = \frac{q_m}{1+q_{r_2}\bar{u}}\left(1-\frac{p_4q_m}{p_1p_2}\bar{u}\,\hat{\phi}\left(\frac{q_{r_1}\bar{u}}{1+q_{r_2}\bar{u}}k\right)\beta\left(\frac{q_{r_1}\bar{u}}{1+q_{r_2}\bar{u}}k;1+q_{r_2}\bar{u}\right)\right) -D_uk^2.
\end{align}
The left panel of Figure \ref{figure: spatial patterns root to shoot} shows $\omega(k;\bar{u})$ as a function of $k$ for a Laplace kernel with $\mu = 0.19$ (see Appendix \ref{appendix: Laplace kernel}), and the parameters $p_1 = 2.0$, $p_2 = 165$, $p_3 = 140$, $p_4 = 140$, $q_m = 1.0$, $q_s = 1.4$, $q_{r_1} = 1.9$, $q_{r_2} = 0.15$ and $D_u = 0.10$. Because the dispersion relation is positive over a non-empty interval of wavenumbers, the homogeneous equilibrium $\bar{u}$ is Turing unstable. The right panel of Figure \ref{figure: spatial patterns root to shoot} shows the corresponding numerical simulation of (\ref{model: reduced nonlocal scalar equation}), and confirms that the Turing instability gives rise to stable vegetation patterns.
\end{example}

\subsubsection{Pattern formation through pressure zones driven by root allometry}
\label{section: pattern formation due to pressure zones driven by root allometry}

Next, we examine the pattern-forming condition (\ref{necessary condition patterning II}). This condition resembles a well-known criterion for pattern formation through a Turing instability in scalar models with non-local resource competition represented by a single convolution term \citep{Fuentes2004,Segal2013,Martinez2013a,Martinez2013b,Martinez2014,Tega2022}, namely
\begin{align}
\label{condition: necessary mechanism single integral}
\inf_{k} \hat{\phi}(k) < 0.
\end{align}
This condition states that the Fourier transform of the kernel function must take negative values for at least some wavenumbers. This cannot occur when $\phi(x)$ is positive definite (by Bochner's theorem). For instance, this means that the Gaussian and Laplace kernels do not satisfy (\ref{condition: necessary mechanism single integral}). Nonetheless, the assumptions imposed on $\phi(x)$ in Section \ref{section: model equations} (namely that it is positive, symmetric, normalized and non-increasing in $|x|$) do not require positive definiteness. Indeed, many ecologically relevant kernel functions, including the parabolic, cosine and top-hat kernels, do satisfy (\ref{condition: necessary mechanism single integral}) as their Fourier transforms take negative values \citep{vanderVoort2026}. Such negative excursions are more likely for kernels that decay relatively slowly near their symmetry point, but more quickly further away \citep{Leimar2008,Pigolotti2010,Barabas2012,Leimar2013}.

The ecological interpretation of the condition (\ref{condition: necessary mechanism single integral}) is based on the formation of competitive pressure zones\footnote{We use the term ``competitive pressure zones" rather than the commonly used ``exclusion areas" following the discussion in \citep{vanderVoort2026}. After a supercritical Turing bifurcation, regions between high-biomass areas experience increased competitive pressure rather than complete exclusion of vegetation.} \citep{Pigolotti2007,Pigolotti2010,Martinez2013a,Martinez2013b,vanderVoort2026}. Kernels with moderately flat centers and rapidly decaying tails can generate a higher cumulative competitive pressure in the regions between high-biomass areas than within these areas themselves. This spatial asymmetry reinforces biomass accumulation inside denser patches, while vegetation growth is suppressed in the surrounding regions, thereby promoting the emergence of vegetation patterns. In some cases (specifically, when the Turing bifurcation is subcritical \citep{vanderVoort2026}), this instability mechanism may immediately lead to exclusion areas in between densely vegetated areas, corresponding to bare soil regions where vegetation cannot establish \citep{Pigolotti2007,Pigolotti2010,Martinez2013a,Martinez2013b}.

To test whether competitive pressure zones also form in the reduced scalar model (\ref{model: reduced nonlocal scalar equation}), we performed a series of numerical experiments. We selected parameter sets for which the condition (\ref{necessary condition patterning I}) was not met, to ensure that observed Turing instabilities were generated solely by (\ref{necessary condition patterning II}). For each simulation, we evaluated the inner integral of (\ref{model: reduced nonlocal scalar equation}) (which quantifies the spatial distribution of competitive pressure, as discussed in Section \ref{section: double integral}) together with the resulting biomass distribution after vegetation patterns had formed. Across all considered choices of the functions $s(u)$ and $r(u)$, the competitive pressure was consistently higher in the regions between high-biomass areas than within these areas themselves (see the right panel of Figure \ref{figure: spatial patterns pressure zones} for a representative example simulation). These numerical observations support the idea that the formation of competitive pressure zones also underlies the instability mechanism associated with the condition (\ref{necessary condition patterning II}).

In contrast to the instability criterion (\ref{condition: necessary mechanism single integral}) for scalar models with a single convolution term, the condition (\ref{necessary condition patterning II}) may also be satisfied by kernels whose Fourier transform is non-negative for all wavenumbers. In Appendix \ref{appendix: kernels}, we show this explicitly for both the Gaussian and the Laplace kernel. For such kernels, the influence of $a$ (as defined in (\ref{definition: a})) on the occurrence of a Turing instability becomes particularly clear. To show this, we first note that it directly follows from the definition of $\beta(k;a)$ in (\ref{definition: function beta}) that
\begin{align}
\frac{d}{d a} \beta(k;a) = \hat{\phi}(k).
\end{align}
Hence, for every fixed wavenumber $k$, the function $\beta(k;a)$ increases with $a$ for kernels with a non-negative Fourier transform. Consequently, decreasing $a$ lowers the minimum of $\beta(k;a)$ (whenever it exists). Since the condition (\ref{necessary condition patterning II}) requires this minimum to be negative when $\hat{\phi}(k)\geq0$, smaller values of $a$ promote the occurrence of a Turing instability.

The numerator and the denominator of $a$ represent the per capita growth rates of $s(u)$ and $r(u)$, respectively, evaluated at the homogeneous equilibrium $\bar{u}$. So, $a$ measures how additional biomass is allocated between increasing the total root density and extending the lateral reach of the root system. Small values of $a$ therefore result in sparse root systems that spread out over larger distances, while larger values lead to denser root systems that are more localized. Consequently, our analysis indicates that a Turing instability is more likely when additional below-ground biomass is invested primarily in lateral root extension than in increasing the total root density. This is consistent with the general principle that long-range competitive interactions promote spatial pattern formation, whereas purely local competition does not. 

However, for kernels for which the Fourier transform does take negative values, the reasoning above no longer applies since $\beta(k;a)$ is no longer monotonically increasing in $a$. In this scenario, the impact of $a$ on whether or not a Turing instability occurs is no longer as straightforward.

In any case, root allometry plays an important role in enabling pattern formation through condition (\ref{necessary condition patterning II}), since this instability criterion depends explicitly on $r(u)$ through $a$. To emphasize this, we consider the absence of root allometry by assuming that the lateral root extent is constant, i.e., $r(u) \coloneqq 1$. Then, the linearization of the double integral in (\ref{model: reduced nonlocal scalar equation}) becomes a self-convolution term similar to the MacArthur-Levins overlap kernel in niche theory \citep{Macarthur1967,Schoener1974}. In those models, the homogeneous equilibrium is spatially stable when uptake and depletion are proportional, and therefore, in particular, when they are governed by the same kernel function \citep{HernandezGarcia2009}. 

For our model, the dispersion relation (\ref{equation: dispersion relation first version}) for $r(u) \coloneqq 1 $ reduces to
\begin{align}
\begin{split}
\label{equation: dispersion relation no root allometry}
    \omega(k;\bar{u}) = \,&m(\bar{u})\left(\frac{m(\bar{u})}{s(\bar{u})}\left.\frac{d}{du}\left(\frac{s(u)}{m(u)}\right)\right|_{u=\bar{u}}-\frac{p_4s'(\bar{u})}{p_3+p_4s(\bar{u})}\hat{\phi}^2(k)\right)-D_uk^2.
    \end{split}
\end{align}
The contribution that comes from non-local water uptake now contains the non-negative square $\hat{\phi}^2(k)$ instead of the product $\hat{\phi}(r(\bar{u})k) \beta (r(\bar{u})k;a)$ which may take negative values. Hence, the pressure zone instability mechanism disappears, and pattern formation through this pathway is no longer possible. The appearance of the squared Fourier transform (as opposed to a single Fourier transform) is a direct consequence of the double non-local integral in (\ref{model: reduced nonlocal scalar equation}), which arises from explicitly accounting for root overlap. Thus, root-to-root competition suppresses the formation of competitive pressure zones, analogously to the role of overlapping kernels in niche theory \citep{HernandezGarcia2009}.

Notably, root allometry breaks the proportionality between uptake and depletion, since the root extents for vegetation at $x$ and $x''$ within the double integral in (\ref{model: reduced nonlocal scalar equation}) depend on their respective biomass densities $u(x,t)$ and $u(x'',t)$. Consequently, uptake and depletion are governed by different kernel functions. More specifically, root allometry counteracts the stabilizing effect of overlap by replacing one of the Fourier transforms with the function $\beta(k;a)$. This allows the contribution to the dispersion relation that comes from non-local water uptake to become positive, thereby restoring the possibility of pattern formation through competitive pressure zones. Thus, root allometry is an essential ingredient for this instability mechanism.

Finally, we note that the condition (\ref{necessary condition patterning II}) alone is not sufficient for pattern formation. The contribution to the dispersion relation that arises from non-local water uptake must also dominate both the diffusive contribution and the constant term (which we assume to be non-positive to isolate the pressure zone instability mechanism) for the dispersion relation to become positive. As before, the stabilizing effect of diffusion can be made arbitrarily small by taking $D_u$ sufficiently small. Under this assumption, the following condition is sufficient for the occurrence of a Turing instability:
\begin{align}
\label{condition: necessary mechanism IIa}
\inf_{k} \left(\hat{\phi}(k) \beta(k;a)\right) < \frac{m(\bar{u})}{s(\bar{u})r(\bar{u})}\left.\frac{d}{du}\left(\frac{s(u)r(u)}{m(u)}\right)\right|_{u=\bar{u}}\frac{p_3+p_4s(\bar{u})r(\bar{u})}{p_4s(\bar{u})r'(\bar{u})}.
\end{align}
This inequality serves as a practical criterion for determining whether competitive pressure zones can induce pattern formation for a particular choice of the root functions $s(u)$ and $r(u)$ in the reduced scalar model (\ref{model: reduced nonlocal scalar equation}).

\begin{figure}[t!]
    \centering
    \begin{minipage}[b]{0.52\textwidth}
        \centering
        \includegraphics[width=\textwidth]{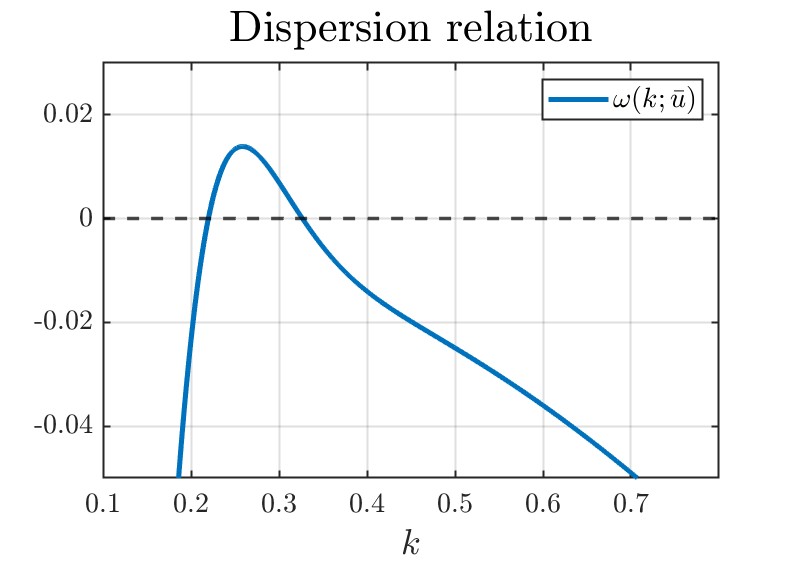}
    \end{minipage}%
        \hspace{-0.7cm}
    \begin{minipage}[b]{0.52\textwidth}
        \centering
        \includegraphics[width=\textwidth]{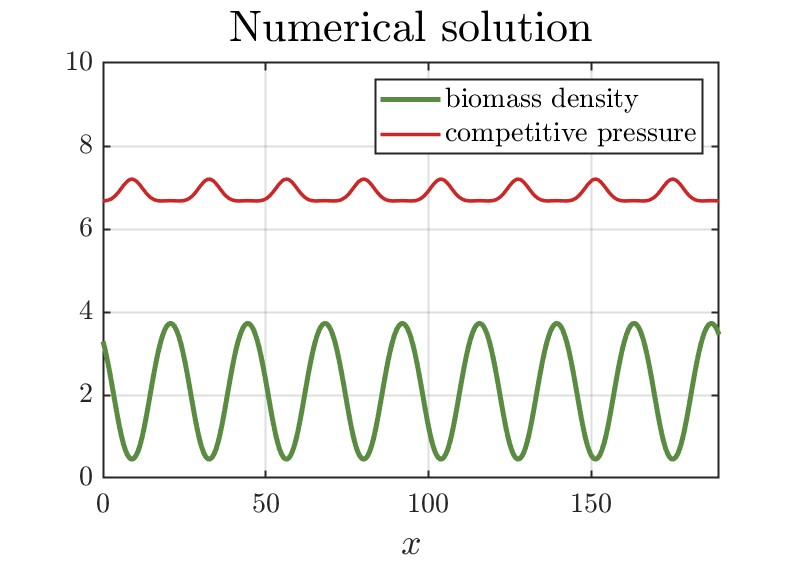}
    \end{minipage}%
\caption{Pattern formation through competitive pressure zones induced by root allometry. The left panel shows that the dispersion relation is positive over a non-empty range of wavenumbers, indicating a Turing instability. The right panel shows the corresponding numerical solution together with the competitive pressure. The two quantities are out of phase, with the competitive pressure highest in between the high-biomass areas.}
\label{figure: spatial patterns pressure zones}
\end{figure}

\begin{example}
We choose $m(u) = q_m u$, $s(u)=q_su^\frac{1}{2}$ and $r(u)=q_r u^{\frac{1}{2}}$ in the model (\ref{model: reduced nonlocal scalar equation}). Then, 
\begin{align}
    \bar{u} = \frac{1}{p_4}\left(\frac{p_1p_2}{q_m}-\frac{p_3}{q_sq_r}\right),
\end{align}
is a positive temporally stable homogeneous equilibrium whenever $p_2>\frac{q_mp_3}{p_1q_sq_r}$. Moreover, the condition (\ref{condition: necessary mechanism IIa}) reduces to
\begin{align}
    \inf_k\hat{\phi}(k) \beta(k;1)<0.
\end{align}
Whenever this condition is satisfied and $D_u$ is sufficiently small, a Turing instability is present. The dispersion relation evaluated at $\bar{u}$ reads,
\begin{align}
\begin{split}
    \omega(k;\bar{u}) = &\,-\frac{q_m^2p_4}{2p_1p_2}\bar{u}\hat{\phi}(q_r\bar{u}^\frac{1}{2} k) \beta(q_r\bar{u}^\frac{1}{2}k;1)-D_uk^2
    \end{split}
\end{align}
For a Gaussian kernel with $\mu = 1.9$ (see Appendix \ref{appendix: Gaussian kernel}), and the parameters $p_1 = 1.8$, $p_2 = 184$, $p_3 = 180$, $p_4 = 130$, $q_m = 1.0$, $q_s = 1.7$, $q_{r} = 1.9$ and $D_u = 0.10$, we show in the left panel of Figure \ref{figure: spatial patterns pressure zones} that the dispersion relation takes positive values for a non-empty range of wavenumbers. The right panel shows the resulting vegetation pattern together with the corresponding competitive pressure, illustrating that the regions in between the high-biomass areas experience the highest competitive pressure.
\end{example}

\section{Discussion}

In this study, we investigated how root overlap and root allometry affect vegetation pattern formation. We first formulated the vegetation-water system (\ref{model: two component model with non local water uptake}) which accounts for root-mediated water uptake, including root allometry through the function $r(u)$. We eliminated the water variable in the limit of fast water dynamics, resulting in the reduced scalar vegetation model (\ref{model: reduced nonlocal scalar equation}). The explicit description of soil water depletion in the two-component setting naturally gives rise to a double integral structure in this reduced model, which accounts for below-ground resource competition when root systems overlap, i.e., root-to-root competition. 

For this scalar model, we derived the dispersion relation which we analyzed to determine the conditions under which Turing instabilities can occur. This analysis revealed two distinct instability mechanisms that can lead to the emergence of spatial vegetation patterns. The first mechanism requires root growth to outpace mortality as the above-ground biomass density increases \citep{vanderVoort2026}, while the second mechanism relies on the emergence of competitive pressure zones, in which vegetation growth is suppressed by the cumulative non-local water uptake of surrounding areas with higher biomass density \citep{Pigolotti2007,Pigolotti2010,Martinez2013a,Martinez2013b,vanderVoort2026}.

Root allometry and root overlap affect these two instability mechanisms differently. Regarding the root-growth-outpacing-mortality mechanism, we note that a root system that extends laterally with the above-ground biomass density naturally boosts root growth. Hence, strong root allometry may potentially cause root growth to outpace mortality, and can therefore drive vegetation pattern formation through this instability mechanism. Nonetheless, even when root allometry is fully absent, the instability mechanism may still operate. Root-to-root competition, on the other hand, does not directly influence the existence criterion (\ref{necessary condition patterning I}) of the root-growth-outpacing-mortality mechanism.

Pattern formation through competitive pressure zones, on the other hand, is promoted by root allometry but inhibited by root-to-root competition. In scalar vegetation models neglecting both root allometry and root-to-root competition, non-local resource competition is typically represented through a single convolution. The resulting dispersion relation contains the Fourier transform of the interaction kernel, and a Turing instability can occur when this Fourier transform takes negative values. Accounting for root overlap, however, introduces a double integral into the scalar vegetation model, which results in a squared Fourier transform in the dispersion relation. Since this quantity is non-negative, the pressure zone instability mechanism disappears, similar to the spatial stability observed in niche theory when overlap structures are considered \citep{HernandezGarcia2009}. When root allometry is additionally included, one of the Fourier transforms is replaced by the function $\beta(k;a)$, as defined in (\ref{definition: function beta}), thereby restoring the possibility of pattern formation driven by competitive pressure zones. In particular, for kernels with a non-negative Fourier transform (e.g., kernels with relatively flat centers), competitive pressure zones are more likely to appear when plants invest more resources in lateral root extension than in increasing their total root density, which is consistent with the general idea that long-range competition promotes pattern formation \citep{Martinez2014}.

Our results show that root allometry and root overlap strongly influence the conditions for vegetation pattern formation. Both features should therefore be considered when developing vegetation models for water-limited environments. In several vegetation-water systems, root allometry and root overlap are explicitly represented \citep{Gilad2007,Bennett2023}, but this is typically not the case in models in which soil water is modelled implicitly \citep{Lefever1997,Martinez2013b,vanderVoort2026}. Our reduction procedure reveals a way to incorporate both features into such scalar models. Root allometry can be represented by allowing the spatial extent of the interaction kernel to increase with the above-ground biomass density, similar to previous work \citep{Lefever2009,Cisternas2020,Tlidi2024}. Root-to-root competition requires a double integral structure (as in (\ref{model: reduced nonlocal scalar equation})) to model the impact of root overlap accurately, instead of the commonly adopted single convolution representation. Such a double integral may also be relevant for the description of any situation in which organisms forage over overlapping areas, such as central place foragers like ants and termites \citep{Maciel2021}.

Throughout this work, we focused on vegetation pattern formation driven by the ability of plants to access water beyond their above-ground location through their laterally extended root systems. To isolate the associated instability mechanisms, we restricted our attention to ecosystems in which soil water diffusion is negligible, thereby excluding diffusion-driven instabilities. Future work should extend our results by introducing soil water diffusion into (\ref{model: two component model with non local water uptake}) to investigate how the instability mechanisms identified here interact with diffusion-driven instabilities. In particular, it would be interesting to determine whether these mechanisms reinforce or suppress one another. 

In our model set-up, we did not consider some of the facilitative interactions that can occur in arid ecosystems. Nonetheless, facilitation plays an important role in vegetation pattern formation; for example, increased water infiltration in vegetated patches can generate spatial self-organization \citep{HilleRisLambers2001,Rietkerk2002}. Similarly, reduced evaporation due to shading has been shown to influence both pattern formation and pattern dynamics \citep{vonHardenberg2001,Gilad2007,vanderVoort2025}. Building on our results, future efforts could investigate how local facilitative interactions, such as those occurring near the stem, and non-local facilitative interactions, for instance within the radius of the crown, interact with root allometry and root overlap to influence the conditions for vegetation pattern formation.

Another natural extension of our framework would be to consider multiple competing plant species. With such an extension, one could investigate whether differences in root allometry characteristics between species can enable species coexistence through spatiotemporal niche formation \citep{Eigentler2020,Maciel2021}.

Lastly, we mainly focused on the emergence of vegetation patterns and did not investigate the dynamics of the resulting patterns far from the pattern-forming bifurcation point. Understanding how spatial vegetation patterns respond to gradual or sudden changes in environmental conditions could provide further insights into ecosystem resilience \citep{Siteur2014,Bastiaansen2020}. Recent work has started to address this question by showing that the shape of non-local competition and facilitation kernels affects both pattern onset and pattern resilience \citep{Clifton2026}. A natural next step is to investigate whether patterns induced by the different instability mechanisms identified here respond differently to environmental change, and, in particular, whether one mechanism can maintain spatial patterns under conditions in which the other fails.

To summarize, our work addresses the influence of root overlap and root allometry on vegetation pattern formation. Our results show that both features should be considered in studies of spatial vegetation patterns in water-stressed ecosystems, as they alter the existence criteria of the underlying instability mechanisms.

\subsubsection*{Acknowledgments}
This research was supported by the Dutch Research Council (NWO) through the project ``Resilience in Complex Systems through Adaptive Spatial Pattern Formation'' (project number OCENW.M20.169), with contributions from J.V. and A.D. Additional funding for A.D. was provided by the European Research Council through the ERC-Synergy project RESILIENCE (proposal number 101071417). This work was partially funded by the Center of Advanced Systems Understanding (CASUS), which is financed by Germany’s Federal Ministry of Research, Technology and Space (BMFTR) and by the Saxon Ministry for Science, Culture and Tourism (SMWK) with tax funds on the basis of the budget approved by the Saxon State Parliament. RMG was also supported by the São Paulo Research Foundation (FAPESP) through grant ICTP-SAIFR 2021/14335-0.

\begin{appendices}

\section{Balance between total water uptake and depletion}
\label{appendix: balance}

In this appendix, we show that the non-local interactions in the vegetation-water system (\ref{model: two component model with non local water uptake}) preserve the water balance of the system. More precisely, we demonstrate that for every time $t$, the total water uptake by vegetation over the spatial domain equals the total depletion of soil water. We begin by integrating the water uptake term in the vegetation equation over space:
\begin{align}
    \int s(u(x,t))\int\phi\left(\frac{x'-x}{r(u(x,t))}\right)w(x',t)\,dx'dx
    \\= \int\int s(u(x,t))\phi\left(\frac{x'-x}{r(u(x,t))}\right)w(x',t)\,dx'dx\\
    \overset{(1)}{=} \int\int s(u(x,t))\phi\left(\frac{x'-x}{r(u(x,t))}\right)w(x',t)\,dxdx'\\
    \overset{(2)}{=} \int\int s(u(x',t))\phi\left(\frac{x-x'}{r(u(x',t))}\right)w(x,t)\,dx'dx\\
    \overset{(3)}{=} \int\int s(u(x',t))\phi\left(\frac{x'-x}{r(u(x',t))}\right)w(x,t)\,dx'dx\\
    = \int w(x,t)\int \phi\left(\frac{x'-x}{r(u(x',t))}\right)s(u(x',t))\,dx'dx.
\end{align}
In equality (1), we use Fubini's theorem to interchange the order of integration. In equality (2), the integration variables are relabelled ($x \rightarrow x'$ and $x' \rightarrow x$). Finally, equality (3) follows from the symmetry property of the kernel function $\phi(x)$.

The final expression equals the integral over space of the water depletion term of the water equation. This confirms the consistency of the proposed non-local formulation (\ref{model: two component model with non local water uptake}).

\section{Properties of the kernel function and its Fourier transform}
\label{appendix: proofs}

\begin{lemma}
\label{lemma: x phi(x) to zero}
    Let $\phi(x)$ be non-negative and non-increasing as a function of $|x|$. Furthermore, assume that
    \begin{align}
        \int \phi(x) \,dx < \infty,
    \end{align}
    holds. Then,
    \begin{align}
        \lim_{x \rightarrow \pm \infty} x \phi(x) = 0.
    \end{align}
\end{lemma}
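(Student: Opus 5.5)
The plan is the classical monotone-integrability argument. By the symmetry $\phi(x)=\phi(-x)$ assumed in Section \ref{section: model equations}, or more generally by running the same argument on the negative half-line with $|x|$ in place of $x$, it suffices to treat $x\to+\infty$. Non-negativity gives $x\phi(x)\geq 0$ for $x>0$, so only an upper bound tending to zero is needed.

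The key step compares $x\phi(x)$ with an integral of $\phi$ over an interval lying to the left of $x$. For $x>0$, monotonicity in $|x|$ gives $\phi(y)\geq\phi(x)$ for every $y\in[x/2,x]$. Integrating over this interval yields
\begin{align}
    \frac{x}{2}\,\phi(x)\leq\int_{x/2}^{x}\phi(y)\,dy\leq\int_{x/2}^{\infty}\phi(y)\,dy.
\end{align}
Since $\int\phi(x)\,dx<\infty$ and $\phi\geq 0$, the tail integral $\int_{x/2}^\infty\phi(y)\,dy$ tends to $0$ as $x\to\infty$. This holds by monotone or dominated convergence, or simply because it is the tail of a convergent improper integral. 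Hence $0\leq x\phi(x)\leq 2\int_{x/2}^\infty\phi(y)\,dy\to 0$, and the squeeze gives the limit.

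No step poses a real obstacle. The only points needing care are choosing the window $[x/2,x]$ so that its length is proportional to $x$, and remembering that integrability of $\phi$ on the whole line makes the tail vanish. The case $x\to-\infty$ follows identically using the window $[x,x/2]$, on which $\phi(y)\geq\phi(x)$ because $|y|\leq|x|$.
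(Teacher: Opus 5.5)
Your proof is correct and rests on the same key estimate as the paper's: monotonicity on the window $[x/2,x]$ gives $\frac{x}{2}\phi(x)\le\int_{x/2}^{x}\phi(y)\,dy$. The only difference is packaging. The paper argues by contradiction: it extracts $y_n\to\infty$ with $y_n\phi(y_n)\ge\varepsilon$ and pairwise disjoint windows $[y_n/2,y_n]$, so that summing forces $\int_0^\infty\phi=\infty$. You instead bound $x\phi(x)$ directly by twice the vanishing tail $\int_{x/2}^\infty\phi$, which is slightly cleaner and avoids the subsequence extraction.
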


\begin{proof}
    We assume $\lim_{x \rightarrow \infty} x \phi(x) \neq 0$. Then, there exists an $\varepsilon > 0$ and a non-negative sequence $\{x_n\}_{n \in \mathbb{N}}$ with $\lim_{n \rightarrow \infty} x_n = \infty$, such that $x_n \phi(x_n) \geq \varepsilon$ for all $n \in \mathbb{N}$.

    We consider an infinite subsequence $\{y_n\}_{n \in \mathbb{N}}$ of $\{x_n\}_{n \in \mathbb{N}}$ such that the intervals $[\frac{y_n}{2},y_n]$ are disjoint, which is possible because $\lim_{n \rightarrow \infty} x_n = \infty$. Then,
    \begin{align}
        \infty &> \int^\infty_0 \phi(x) dx \geq \sum^\infty_{n=0}\int^{y_n}_{\frac{y_n}{2}}\phi(x) dx \geq \sum^\infty_{n=0}\left(y_n-\frac{y_n}{2}\right)\phi(y_n)\geq \frac{1}{2}\sum^\infty_{n=0}\varepsilon = \infty,
    \end{align}
    which is clearly a contradiction. So, we must have $\lim_{x \rightarrow \infty} x\phi(x) = 0$.

    Proving $\lim_{x \rightarrow -\infty} x\phi(x) = 0$ goes along the same lines.
\end{proof}

\begin{lemma}
\label{lemma: beta(k) to zero}
    Let $\phi(x)$ satisfy the assumptions of Lemma \ref{lemma: x phi(x) to zero}. Then,
    \begin{align}
        \lim_{k \rightarrow \pm \infty} \beta (k;a) = 0.
    \end{align}
\end{lemma}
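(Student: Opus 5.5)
The plan is to use the identity already derived in the main text, $k\hat{\phi}'(k) = -\left(\hat{\phi}(k)+\hat{\psi}(k)\right)$, where $\psi(x)=x\phi'(x)$. This identity was obtained in (\ref{equation: relation between phi_prime, phi and psi}) using only Lemma \ref{lemma: x phi(x) to zero}. Substituting it into the definition (\ref{definition: function beta}) gives
\begin{align}
\beta(k;a) = (1+a)\hat{\phi}(k) - \hat{\phi}(k) - \hat{\psi}(k) = a\,\hat{\phi}(k) - \hat{\psi}(k).
\end{align}
It therefore suffices to show that both $\hat{\phi}(k)$ and $\hat{\psi}(k)$ tend to zero as $k\to\pm\infty$.

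For $\hat{\phi}$, the kernel is non-negative and integrable, so $\phi\in L^1(\mathbb{R})$. The Riemann--Lebesgue lemma then gives $\hat{\phi}(k)\to 0$, exactly as used in Section \ref{section: well-posedness}.

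For $\hat{\psi}$, the plan is again to apply Riemann--Lebesgue, which requires $\psi\in L^1(\mathbb{R})$. This is the main step. Since $\phi$ is non-increasing in $|x|$, we have $\phi'(x)\le 0$ for $x>0$, so $|\psi(x)| = -x\phi'(x)$ there, and by symmetry the same holds on $x<0$ after reflection. Integrating by parts on $[0,R]$ gives
\begin{align}
\int_0^R |x\phi'(x)|\,dx = -R\phi(R) + \int_0^R \phi(x)\,dx .
\end{align}
By Lemma \ref{lemma: x phi(x) to zero}, $R\phi(R)\to 0$ as $R\to\infty$, so the right-hand side converges to $\int_0^\infty\phi\,dx<\infty$. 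Hence $\int|\psi|\,dx = \int\phi\,dx<\infty$, which is consistent with identity (\ref{identity: useful}). Riemann--Lebesgue then yields $\hat{\psi}(k)\to 0$.

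Combining the two limits, $\beta(k;a) = a\hat{\phi}(k)-\hat{\psi}(k)\to 0$ as $k\to\pm\infty$.

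The expected obstacle is regularity. The integration-by-parts steps assume $\phi$ is differentiable, or at least absolutely continuous, as is implicitly assumed throughout the paper when $\phi'$ appears. If $\phi$ has jumps, as the top-hat kernel does, $\phi'$ must be read as a measure. In that case the same argument applies with $\psi\,dx$ replaced by the finite measure $x\,d\phi(x)$, whose total variation is still $\int\phi\,dx$. Riemann--Lebesgue for measures fails in general, however. The argument would then need to work directly with $k\hat{\phi}'(k)$ via the identity $i\int x\phi(x)e^{ikx}\,dx$, or else restrict to absolutely continuous $\phi$, which I would state as a standing assumption.
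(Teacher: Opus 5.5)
Your proof is correct and follows the paper's own route. You rewrite $\beta(k;a)=a\hat{\phi}(k)-\hat{\psi}(k)$ via the identity from the main text and apply Riemann--Lebesgue to both terms, using $\int|x\phi'(x)|\,dx=-\int x\phi'(x)\,dx=\int\phi\,dx=1$ for integrability. Your regularity caveat is more than technical: for the top-hat kernel ($\phi=1/2$ on $[-1,1]$) one gets $\beta(k;a)=a\sin(k)/k+\cos k$, which has no limit. So the statement really needs $\phi$ absolutely continuous (tacit in the paper), and working directly with $k\hat{\phi}'(k)$ cannot rescue it.
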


\begin{proof}
    We recall from (\ref{equation: relation between phi_prime, phi and psi}) that
    \begin{align}
        \hat{\phi}'(k) = -\frac{1}{k}\left(\hat{\phi}(k)+\int x\phi'(x) e^{ikx}dx\right),
    \end{align}
    which implies
    \begin{align}
        \beta(k;a) = a\hat{\phi}(k)-\int x\phi'(x) e^{ikx}dx.
    \end{align}
    The first term converges to $0$ as $k \rightarrow \pm\infty$ by the Riemann-Lebesgue lemma. The same argument can be applied to the second term, since
    \begin{align}
    \int |x \,\phi'(x)| dx = -\int x \,\phi'(x) dx = 1 < \infty,
    \end{align}   
    where we used the identity (\ref{identity: useful}) in the last equality.
\end{proof}

\section{The existence of a temporally stable homogeneous equilibrium}
\label{appendix: temporally stable}

\begin{lemma}
\label{lemma: temporal stability}
If the model (\ref{model: reduced nonlocal scalar equation}) admits a positive homogeneous equilibrium $u^*$, and all homogeneous equilibria are non-degenerate, then there exists a positive temporally stable homogeneous equilibrium $\bar{u}$.
\end{lemma}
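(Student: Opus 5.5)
The plan is to reduce temporal stability to a sign condition on a scalar function of $u$ and then use an intermediate-value argument. Set
\begin{align*}
F(u) \coloneqq \frac{p_1p_2s(u)r(u)}{p_3+p_4s(u)r(u)} - m(u),
\end{align*}
so that by (\ref{equation: homogeneous steady state equation}) the homogeneous equilibria are exactly the zeros of $F$. For homogeneous perturbations ($k=0$), the dispersion relation (\ref{equation: dispersion relation first version}) gives $\omega(0;\bar{u}) = F'(\bar{u})$. To see this, note that $\hat{\phi}(0)=1$ and $\hat{\psi}(0)=\int y\phi'(y)\,dy=-1$ by (\ref{identity: useful}). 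The bracket then becomes $\frac{d}{du}(sr)|_{\bar{u}}$, and the whole $k=0$ expression is $p_1\frac{d}{du}\big(sr\,\theta(sr)\big)\big|_{\bar{u}}-m'(\bar{u})$. It is cleaner to verify this directly: restricting (\ref{model: reduced nonlocal scalar equation}) to spatially constant $u$ yields the ODE $\dot{u}=F(u)$. Hence a positive equilibrium $\bar{u}$ is temporally stable precisely when $F'(\bar{u})<0$. Non-degeneracy means that $F'\neq 0$ at every zero, so every zero is simple and $F$ changes sign there.

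The key step is to fix the sign of $F$ for large $u$. The growth term is bounded above by $p_1p_2/p_4$ for all $u\ge 0$, since $sr\ge 0$. By assumption $m(u)\to\infty$ as $u\to\infty$. Therefore $F(u)\to-\infty$, and in particular there is some $U>u^*$ with $F(u)<0$ for all $u\ge U$.

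Now take the largest positive zero of $F$. The zero set of the continuous function $F$ in $(0,U]$ is closed and nonempty, because it contains $u^*$. To obtain a maximum that is strictly positive, I use the fact that simple zeros are isolated, so the zeros in $[u^*,U]$ are finite. (Some care is needed here: I rely on non-degeneracy and on $F$ being $C^1$ near its zeros, which follows from $s,r,m$ being differentiable.) Let $\bar{u}=\max\{u\in[u^*,U]: F(u)=0\}$. Then $F<0$ on $(\bar{u},\infty)$. Since $F'(\bar{u})\neq0$, and a positive derivative would force $F>0$ just to the right of $\bar{u}$, we must have $F'(\bar{u})<0$. Thus $\bar{u}\ge u^*>0$ is temporally stable.

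I expect the main obstacle to be the identification $\omega(0;\bar{u})=F'(\bar{u})$, or equivalently justifying that temporal stability is governed by the ODE $\dot{u}=F(u)$. This must be consistent with the $k=0$ evaluation of (\ref{equation: dispersion relation final version}). I would check it by evaluating $\beta(0;a)=1+a$ and confirming that the resulting expression equals $\frac{d}{du}\big(p_1 sr\,\theta(sr)-m\big)$ at $\bar{u}$.
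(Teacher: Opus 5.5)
Your proof is correct and follows essentially the same route as the paper's: both identify $\omega(0;u)$ with the derivative of (growth minus mortality), use that the growth term is bounded by $p_1p_2/p_4$ while $m(u)\to\infty$, and conclude that the largest positive equilibrium must have negative slope. The differences are minor. You skip the paper's preliminary case split on the sign of $F'(u^*)$, since your argument covers both cases at once. Your appeal to isolation of simple zeros is unnecessary, because the zero set of $F$ in the compact interval $[u^*,U]$ is already compact and nonempty, so its maximum exists.
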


\begin{proof}
    Let $u^*$ be an arbitrary positive homogeneous equilibrium of (\ref{model: reduced nonlocal scalar equation}). Then, by (\ref{equation: homogeneous steady state equation}), it satisfies
    \begin{align}
    \label{equation: another version of the hsse}
        p_1s(u)r(u)\theta(s(u)r(u))=m(u).
    \end{align}
    We refer to the left- and right-hand side of this equation as $LHS(u)$ and $RHS(u)$, respectively. By the non-degeneracy of $u^*$, we have $LHS'(u^*)\neq RHS'(u^*)$. First, we consider $LHS'(u^*) < RHS'(u^*)$. We can express the dispersion relation $\omega(k;u)$ evaluated at $k=0$ as follows 
    \begin{align}
    \label{equation: dispersion relation at k=0}
    \begin{split}
        \omega(0;u) = LHS'(u)-RHS'(u),
    \end{split}
    \end{align}
    directly implying that $\omega(0;u^*) < 0$. So, in this scenario $u^*$ itself is temporally stable, and we can take $\bar{u} = u^*$.
    
    Next, we consider the case $LHS'(u^*) > RHS'(u^*)$. This inequality together with the continuity of the curves $LHS(u)$ and $RHS(u)$ for $u>0$ implies that $LHS(u)>RHS(u)$ for all $u>u^*$ sufficiently close to $u^*$. Furthermore, by substituting the definition of $\theta$ from (\ref{definition: theta}) into (\ref{equation: another version of the hsse}), we find that the left-hand side is confined in $[0,\frac{p_1p_2}{p_4}]$ for all $u \geq 0$ while the right-hand side converges to $\infty$ for $u \rightarrow \infty$. This yields that $LHS(u) < RHS(u)$ for sufficiently large $u$. Hence, by the intermediate value theorem, there exists at least one non-degenerate homogeneous equilibrium larger than $u^*$. Let $\bar{u}>u^*$ denote the largest non-degenerate homogeneous equilibrium, which exists as $LHS(u)<RHS(u)$ for sufficiently large $u$. If $LHS'(\bar{u})>RHS'(\bar{u})$, we would have $LHS(u)> RHS(u)$ for all $u>\bar{u}$ which contradicts that $LHS(u)<RHS(u)$ for sufficiently large $u$. Thus, $LHS'(\bar{u})<RHS'(\bar{u})$.

    Finally, by (\ref{equation: dispersion relation at k=0}), this inequality implies that $\omega(0;\bar{u})<0$. Therefore, $\bar{u}$ is temporally stable.
\end{proof}

\section{Kernel functions and their properties}
\label{appendix: kernels}

\subsection{Laplace kernel}
\label{appendix: Laplace kernel}
The Laplace kernel function can be written as,
\begin{align}
    \phi(x) = \frac{1}{2\mu}e^{-\frac{|x|}{\mu}},
\end{align}
where the parameter $\mu>0$ is the mean interaction distance. Its Fourier transform equals,
\begin{align}
    \hat{\phi}(k) = \frac{1}{\left(\mu k\right)^2+1},
\end{align}
which is strictly positive for all $k$. Therefore, $\hat{\phi}(k) \beta(k;a)$ becomes negative exactly when $\beta(k;a)$ becomes negative. This function can be calculated to be,
\begin{align}
    \beta(k;a) &= \frac{\left(a-1\right)\left( \mu k\right)^2 + a + 1}{\left(\left( \mu k\right)^2+1\right)^2}.
\end{align}
It follows directly that $\beta(k;a)$ takes negative values if and only if $a < 1$. In this case, the function becomes negative for $|k|$ large enough, in particular, when,
\begin{align}
    |k| > \frac{1}{\mu}\sqrt{\frac{1+a}{1-a}}.
\end{align}

\subsection{Gaussian kernel}
\label{appendix: Gaussian kernel}
The Gaussian kernel function can be written as,
\begin{align}
    \phi(x) = \frac{1}{\pi \mu}e^{-\frac{1}{\pi}\frac{x^2}{\mu^2}},
\end{align}
where the parameter $\mu>0$ is the mean interaction distance. Its Fourier transform equals,
\begin{align}
    \hat{\phi}(k) = e^{-\frac{\pi}{4}\left(\mu k\right)^2},
\end{align}
which is strictly positive for all $k$. Therefore, $\hat{\phi}(k) \beta(k;a)$ becomes negative exactly when $\beta(k;a)$ becomes negative. This function can be calculated to be,
\begin{align}
    \beta(k;a) &= \left(1+a-\frac{\pi}{2}\left(\mu k\right)^2\right)e^{-\frac{\pi}{4}\left(\mu k\right)^2}.
\end{align}
It follows directly that for every $a>0$, $\beta(k;a)$ becomes negative for $|k|$ large enough, in particular, when,
\begin{align}
    |k| > \frac{1}{\mu}\sqrt{\frac{2\left(1+a\right)}{\pi}}.
\end{align}

\end{appendices}

\bibliographystyle{unsrtnat} 
\bibliography{ref}

\end{document}